\pdfoutput=1
\documentclass[12pt,a4paper]{article}
\usepackage{amsmath,amssymb,amsfonts,mathtools,amsthm}
\usepackage{hyperref}
\usepackage{color}

\usepackage{graphicx}
\usepackage{todonotes}
\usepackage{comment}
\usepackage{todonotes}
\usepackage{authblk}
\usepackage{cite}

\newcommand{\ket}[1]{\ensuremath{|#1\rangle}}
\newcommand{\bra}[1]{\ensuremath{\langle#1|}}
\newcommand{\ketbra}[2]{\ensuremath{\ket{#1}\bra{#2}}}
\newcommand{\proj}[1]{\ensuremath{\ket{#1}\bra{#1}}}
\newcommand{\braket}[2]{\ensuremath{\langle{#1}|{#2}\rangle}}

\newcommand{\Id}{\mathrm{I}}
\newcommand{\ii}{\mathrm{i}}

\newcommand{\ee}{\mathrm{e}}

\newcommand{\Hl}{\mathcal H}

\newcommand{\GG}{\mathcal G}

\newcommand{\E}{\mathbb E}
\newcommand{\snode}{w}

\newcommand{\erfc}{\operatorname{erfc}}

\newcommand{\ie}{\emph{i.e.\/}}

\newtheorem{theorem}{Theorem}
\newtheorem{lemma}{Lemma}

\newtheorem{proposition}[theorem]{Proposition}




\begin{document}
	\title{Vertices cannot be hidden from quantum spatial search for almost all
random graphs}
	\author[1,2]{Adam Glos}
	
	\author[1,3]{Aleksandra Krawiec}
	
	\author[1,3]{Ryszard Kukulski\footnote{rkukulski@iitis.pl}}
	
	
	\author[1,4]{Zbigniew Pucha\l{}a}
	
	\affil[1]{Institute of Theoretical and Applied Informatics, Polish Academy
		of Sciences,  Ba{\l}tycka 5, 44-100 Gliwice, Poland}
	\affil[2]{Institute of Informatics, Silesian University of Technology,
			ul. Akademicka 16, 44-100 Gliwice, Poland}
	\affil[3]{Institute of Mathematics, University of Silesia, ul. Bankowa 14, 
		40-007 Katowice, Poland}
	\affil[4]{Faculty of Physics, Astronomy and Applied Computer Science,
		Jagiellonian University, ul. {\L}ojasiewicza 11, 30-348 Krak\'ow, Poland}
	\date{}


	\maketitle
	
	\begin{abstract}
	In this paper we show that all nodes can be found optimally for almost all
	random Erd\H{o}s-R\'enyi $\GG(n,p)$ graphs using continuous-time quantum
spatial search
	procedure. This works for both adjacency and Laplacian matrices, though under
	different conditions. The first one requires $p=\omega(\log^8(n)/n)$, while the
	seconds requires $p\geq(1+\varepsilon)\log (n)/n$,  where  $\varepsilon>0$. The
proof was made by
	analyzing the convergence of eigenvectors corresponding to outlying eigenvalues
	in the $\|\cdot\|_\infty $ norm. At the same time for
	$p<(1-\varepsilon)\log(n)/n$, the property does not hold for
	any matrix, due to the connectivity issues. Hence, our derivation concerning
	Laplacian matrix is tight.
\end{abstract}

\paragraph{Introduction}	Quantum walk is a topic of great interest in quantum information theory
\cite{childs2004spatial,childs2003exponential,ambainis2003quantum}. Numerous
possible applications were already discovered, including quantum
spatial search \cite{childs2004spatial,chakraborty2016spatial}, Google
algorithm \cite{paparo2012google,paparo2013quantum,sanchez2012quantum} or
quantum transport \cite{mulken2007quantum,mulken2011continuous}. Throughout
this article we consider quantum spatial search procedure, which is an example
of an algorithm yielding a result up to quadratically faster than its classical
counterpart.
Since the very first paper describing it was published
\cite{childs2004spatial}, plenty of new results have appeared in the
literature. This includes the noise resistance \cite{roland_noise_2005},
efficiency
analysis~\cite{childs2004spatial,chakraborty2016spatial,PhysRevLett.119.220503,tulsi_success_2016,philipp_continuous-time_2016},
imperfect implementation \cite{wong_spatial_2016}, difference in implementation
\cite{wong_laplacian_2016}, etc.

Unfortunately, most of the results concern very specific graph classes like
complete graphs \cite{childs2004spatial,roland_noise_2005} or their simplex
\cite{wong_spatial_2016}, and binary trees \cite{philipp_continuous-time_2016}.
Due to some kind of `symmetry', it was not necessary to make analysis for all
vertices separately (as for example in complete graphs or hypercubes), or at
least it could be easily fixed (for example by the level in binary trees). The
first big step towards the generalization into a large collection of graphs is
the work of Chakraborty et al. \cite{chakraborty2016spatial}, where
Erd\H{o}s-R\'enyi random graph model $\GG(n,p)$ was analyzed 
(with $n,p$ standing for the number of vertices and probability of an edge 
being present respectively). The authors have proven that
for almost all graphs almost all vertices can be found optimally. Since there
are already known examples of graphs for which some vertices are searched in
$\Theta(n^{\frac{1}{2}+a})$ time for 
$a>0$~\cite{childs2004spatial,philipp_continuous-time_2016} (throughout this 
paper
$O,o,\Omega,\omega,\Theta,\sim$ denote asymptotic relations, see
\cite{knuth1976big}), the result cannot be strengthened into `all graphs'.

The proof of the main result of Chakraborty et al. in
\cite{chakraborty2016spatial} is based on a lemma describing limit behavior of a
principal eigenvector $\ket{\lambda_1}$ of the adjacency
matrix. The authors show that for $p>\log^{\frac{3}{2}}(n)/n$, if $\ket
s=\frac{1}{\sqrt{n}}\sum_{v\in V}\ket v= \alpha \ket{\lambda_1}+\beta
\ket{\lambda_1}^\bot$, then almost surely $\alpha = 1-o(1)$. Since the time
needed for quantum spatial search is
$\Theta(\frac{1}{|\braket{\snode}{\lambda_1}|})$,  where $
\snode$ denotes the marked vertex, we have that almost all vertices can be
found in optimal time. 
However, in this case it is not trivial which vertex is chosen, since the
Erd\H{o}s-R\'enyi graph is not necessarily symmetric. This kind of
convergence allows the existence of vertices, which can be found in linear
time. As an example consider a vector
\begin{equation}
\ket{\lambda_1'}=\frac{1}{n \sqrt{k}} \sum_{i=1}^k \ket{i} + \frac{\sqrt{n^2 -
1}}{n \sqrt{n-k}}\sum_{i=k+1}^{n}\ket i , \label{eq:chakraborty-counterexample-state}
\end{equation} for $k=o(n)$.
We have, $\braket{s}{\lambda_1'} = \alpha =  1 - o(1)$ and thus a-priori,
vector $\ket{\lambda_1'}$
from~\cite{chakraborty2016spatial}, can be the leading eigenvector of an 
adjacency matrix.
In such a case the argument used by \cite{chakraborty2016spatial} is not tight 
enough to exclude a possibility that all 
vertices $w \in
\{1,\dots,k\}$ will be found in $\Omega(n)$ time, which is actually a random
guess complexity. Note that it is even possible that for almost all graphs such
vertices exist. Furthermore, many of the applications mentioned in
\cite{chakraborty2016spatial} require $\braket{i}{v_1}\approx\braket{j}{v_1}$
for arbitrary $i,j$. Otherwise, creating Bell states or quantum transport will
be at least very difficult.

What is more, due to the laws of quantum mechanics, the measurement time needs
to be known since the beginning. This includes not only differences in the
complexity, but a constant as well. For example, if for two different nodes
$v,v'$ we have $\braket{v}{\lambda_1}=\frac{1}{\sqrt n}$  and
$\braket{v'}{\lambda_1}=\frac{2}{\sqrt n}$, then different measurement times
should be chosen for each.

Both effects mentioned above can be described as hiding nodes in the graphs.
Finally, we propose a following research problem: \emph{can we actually `hide' a
	vertex in a random Erd\H{o}s-R\'enyi graph}? We have managed to show, that 
	in
the case of adjacency matrix, $p=\omega(\log^3(n)/(n\log^2(\log(n)))$ is a 
sufficient
requirement for all-vertices optimal search. Under further constraint 
$p=\omega(\log^8(n)/n)$
we have common time measurement. Moreover, we went a step 
further than the authors of \cite{chakraborty2016spatial} and studied also 
Laplacian matrix, which led us to tighter 
results. In the case of Laplacian matrix,
$p>(1+\varepsilon)\log(n)/n$, for constant $\varepsilon>0$, 
is sufficient for common time measurement, however
in the $p=\Theta(\log(n)/n)$  case, it may not be true that almost surely the
probability 1 of a successful measurement is achieved. If
$p<(1-\varepsilon)\log(n)/n$, then a random graph contains almost surely
isolated nodes \cite{erdos1960evolution}, hence it is possible to hide a vertex.

\paragraph{Element-wise optimality for adjacency matrix.} Let $G=(V,E)$
be a simple undirected graph with node set $V=\{1,\dots,n\}$ and edge set
$E\subset V\times V$. Moreover, let $\Hl_G$ be a quantum system spanned by an
orthonormal basis $\{\ket v : v \in V\}$. Quantum spatial search is based on the
Schr\"odinger differential equation
\begin{equation}
\ket{\dot \psi_t} = -\ii H \ket{\psi_t} = -\ii \left ( -M_G- \proj {\snode}
\right ) \ket{\psi_t},\label{eq:quantum-spatial-search}
\end{equation}
where  $M_G$ is a matrix corresponding to the graph structure, typically
rescaled adjacency matrix $A$ or Laplacian $L=D-A$, where $D$ 
is the degree matrix.
In \cite{chakraborty2016spatial}, authors have proven that for a random
Erd\H{o}s-R\'enyi graph in case of an adjacency matrix almost all vertices from
almost all graphs can be found optimally. We say some property holds almost 
surely for all graphs, where the probability of choosing random graph having 
such is $1-o(1)$.
The result was based on the following simplified lemma.
\begin{lemma}[\cite{chakraborty2016spatial}] \label{lem:the-lemma}
	Let $H$ be a Hamiltonian with eigenvalues $\lambda_1\geq\dots\geq\lambda_n$ 
	satisfying $\lambda_1=1$ and $|\lambda_i|\leq c <1$ for all $i>1$ with 
	corresponding eigenvectors $\ket {\lambda_1}=\ket 
	s,\ket{\lambda_2},\dots,\ket{\lambda_n}$ and let $w$ denote a marked 
	vertex. For an appropriate choice of 
	$r\in[-\frac{c}{1+c},\frac{c}{1-c}]$, the starting state $\ket{s}$ evolves 
	by the Schr$\ddot{o}$dinger's equation with the Hamiltonian $(1+r)H 
	+\ketbra{w}{w}$ for time $t = \Theta(\sqrt n)$ into the state $\ket f $ 
	satisfying $|\braket{w}{f}|^2\geq \frac{1-c}{1+c}+o(1)$.
\end{lemma}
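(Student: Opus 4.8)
The plan is to diagonalize the search Hamiltonian $H_s = (1+r)H + \ketbra{w}{w}$ through its resolvent, reduce the long-time dynamics to a two-level Rabi oscillation between $\ket{s}$ and $\ket{w}$, and then tune $r$ to hit resonance. Write $a_i = |\braket{w}{\lambda_i}|^2$ and $\mu_i = 1-\lambda_i \in [1-c,1+c]$ for $i>1$; since $\ket{s}=\ket{\lambda_1}$ is the uniform state, $a_1 = |\braket{w}{s}|^2 = 1/n$. Any eigenvector of $H_s$ with eigenvalue $E$ outside the spectrum of $(1+r)H$ is proportional to $\ket{v_E} = ((1+r)H - E)^{-1}\ket{w}$, and taking the overlap with $\bra{w}$ yields the secular equation $F(E) := \sum_i \frac{a_i}{E - (1+r)\lambda_i} = 1$. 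I would isolate the principal pole, $F(E) = \frac{a_1}{E-(1+r)} + G(E)$ with $G(E) = \sum_{i>1}\frac{a_i}{E-(1+r)\lambda_i}$, and note that because $c<1$ the bulk poles $(1+r)\lambda_i$ stay bounded away from the region near $E = 1+r$, so $G$ is smooth there.

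Next I would fix $r$ by the resonance condition $G(1+r) = 1$, i.e.\ $1+r = \sum_{i>1} a_i/\mu_i$; the bounds $\mu_i \in [1-c,1+c]$ together with $\sum_{i>1}a_i = 1 - o(1)$ place this $r$ inside the stated interval $[-\frac{c}{1+c}, \frac{c}{1-c}]$. Writing $E = 1+r+x$ and expanding $G(1+r+x) = 1 + xG_1 + O(x^2)$ with $G_1 = G'(1+r) = -\frac{1}{(1+r)^2}\sum_{i>1} a_i/\mu_i^2 < 0$, the secular equation collapses to $a_1/x = -xG_1 + O(x^2)$, giving two symmetric roots $E_\pm = (1+r) \pm \sqrt{a_1/|G_1|}$. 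Since $a_1 = 1/n$ and $|G_1| = \Theta(1)$, the gap is $E_+ - E_- = \Theta(n^{-1/2})$. A short residue computation using $\|v_E\|^2 = -F'(E)$ and $\braket{w}{v_E} = -1$ then gives $|\braket{s}{E_\pm}|^2 = \frac{1}{2na_1} = \frac12$ and $|\braket{w}{E_\pm}|^2 = \frac{1}{2|G_1|}$, so $\ket{s}$ lives, up to $o(1)$, entirely on $\ket{E_\pm}$.

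I would then assemble the transition amplitude $\braket{w}{f} = \sum_j \braket{w}{E_j}\braket{E_j}{s}\ee^{-\ii E_j t}$, retaining only $j=\pm$. The two kept products have equal magnitude and opposite sign, so $|\braket{w}{f}|^2 = \frac{1}{|G_1|}\sin^2\!\big(\tfrac{(E_+-E_-)t}{2}\big)$, which peaks at $t = \pi/(E_+-E_-) = \Theta(\sqrt n)$ with value $1/|G_1| = \frac{(\sum_{i>1}a_i/\mu_i)^2}{\sum_{i>1}a_i/\mu_i^2}$. Bounding the denominator by $1/\mu_i \le 1/(1-c)$ and then $1/\mu_i \ge 1/(1+c)$ in the numerator gives $1/|G_1| \ge (1-c)\sum_{i>1} a_i/\mu_i \ge \frac{(1-c)(1-a_1)}{1+c} = \frac{1-c}{1+c} + o(1)$, which is the claim.

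The main obstacle is rigour in the reduction rather than the algebra: I must show the non-resonant eigenstates contribute only $o(1)$ both to $\ket{s}$ and to $\braket{w}{f}$, and that the $O(x^2)$ and higher terms in the expansion of $G$ shift $E_\pm$ and the overlaps by $o(1)$. Both hinge on the spectral gap $c<1$, which keeps every denominator $\mu_i$ bounded away from zero and confines $E_\pm$ to a neighbourhood of $1+r$ disjoint from the bulk; quantifying ``bounded away'' uniformly is where the care is needed.
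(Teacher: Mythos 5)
Your proposal is correct, and it is essentially the argument by which this lemma is actually proved in \cite{chakraborty2016spatial}; note that the paper itself does not prove Lemma~\ref{lem:the-lemma} at all --- it imports it from that reference and records only the selection rule \eqref{eq:equality-r} for $r$. Your chain of steps (resolvent ansatz, secular equation $\sum_i a_i/(E-(1+r)\lambda_i)=1$ with $a_i=|\braket{w}{\lambda_i}|^2$, resonance tuning of $r$, two symmetric roots $E_\pm$ with gap $2\sqrt{a_1/|G_1|}=\Theta(n^{-1/2})$, residue computation giving $|\braket{s}{E_\pm}|^2=\frac{1}{2}+o(1)$ and $|\braket{w}{E_\pm}|^2=\frac{1}{2|G_1|}+o(1)$, the Rabi formula, and the bound $1/|G_1|\geq\frac{1-c}{1+c}+o(1)$) is the route taken there, and the error control you defer is exactly the bookkeeping done in that proof; it goes through because for fixed $c<1$ the bulk poles $(1+r)\lambda_i$ stay a distance $\Theta(1)$ away from $E_\pm$.

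Two points of comparison are worth making precise. First, your normalization of the resonance condition, $G(1+r)=1$, i.e.\ $1+r=\sum_{i>1}a_i/\mu_i$, can place $r$ below $-\frac{c}{1+c}$ by $O(1/n)$, since $\sum_{i>1}a_i=1-\frac{1}{n}$ (not $1$); the variant with right-hand side $\sum_{i>1}a_i$, i.e.\ $1+r=\bigl(\sum_{i>1}a_i/\mu_i\bigr)/\bigl(\sum_{i>1}a_i\bigr)$, makes $1+r$ an $a_i$-weighted mean of the $1/\mu_i\in[\frac{1}{1+c},\frac{1}{1-c}]$ and therefore lands exactly in the stated interval. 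The two choices differ by a detuning of order $O(1/n)=o(n^{-1/2})$, hence by $o(1)$ in success probability, so your version proves the lemma after this one-line adjustment. Second, do not be troubled that your condition disagrees with Eq.~\eqref{eq:equality-r} as printed: the discrepancy lies in the paper, not in your argument. With denominators $(1+r)\lambda_i-r$, the printed equation fails an elementary sanity check --- for the complete graph, where $\lambda_i=-\frac{1}{n-1}$ for $i\geq 2$, it forces $r=-1$, which lies outside $[-\frac{c}{1+c},\frac{c}{1-c}]=[-\frac{1}{n},\frac{1}{n-2}]$ and contradicts both the paper's existence claim and its remark that $c=o(1)$ yields $r=o(1)$. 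The denominators should evidently read $(1+r)(\lambda_1-\lambda_i)=(1+r)(1-\lambda_i)$, and with that correction Eq.~\eqref{eq:equality-r} becomes precisely your resonance condition in the weighted-mean normalization just described.
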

According to the proof of the lemma, the bound can be derived by choosing $r$ 
satisfying
\begin{equation}
\sum_{i=2}^n \frac{|\braket{w}{\lambda_i}|^2}{(1+r)\lambda_i -r} = 
\sum_{i=2}^{n}|\braket{w}{\lambda_i}|^2 \label{eq:equality-r}.
\end{equation}
The assumptions from the Lemma guarantee the existence of $r\in 
[\frac{-c}{1+c},\frac{c}{1-c}]$ satisfying the above  equality. Note that the 
result is constructive for $c=o(1)$, as in this case $r=o(1)$ as well as 
$t=\frac{\pi \sqrt{n}}{2}$. Otherwise, a proper determination of $r$ and $t$ is 
needed.
	
	According to the Lemma, two properties of $M_G$ are useful in proving search
optimality. Firstly, the matrix should have a single outlying eigenvalue.
Secondly, if $\ket{\lambda_1}$ is the eigenvector corresponding to the outlying
eigenvalue, one should have $ \vert \braket{\snode}{\lambda_1} \vert
=\Theta(\frac{1}{\sqrt n})$.

Note that in the limit $n\to\infty$, norms cease to be equivalent, thus
different concepts of closeness of vectors can be chosen. In
\cite{chakraborty2016spatial}, authors choose $1-|\braket{\psi}{\phi}|$
for arbitrary vectors $\ket{\psi}, \ket{\phi}$, which allows
to infer
that $o(n)$ of nodes can be found in time $\omega(\sqrt n)$, 
see the example given in Eq. \eqref{eq:chakraborty-counterexample-state}. 
In order to make statements concerning all vertices we should 
study
the limit behavior of the principal vector in $L^{\infty}$ norm $\| \cdot
\|_\infty$, 
which bounds the maximal deviation of coordinates.
More precisely, we are interested whether $\|\ket{
\lambda_{1}}-\ket {s}\|_\infty
= \frac{o(1)}{\sqrt n}$, as this would imply that for an arbitrary marked node $
\snode$ we have $ \braket{\snode}{\lambda_{1}}=(1+o(1))\frac{1}{\sqrt n}$.
The above will give us the bound 
$\Theta(\frac{1}{|\braket{\snode}{\lambda_1}|}) 
= \Theta(\sqrt{n})$ for 
the time needed for quantum spatial algorithm to locate vertex $w$.

	Indeed, 
	a convergence of infinity norm
	was shown by Mitra~\cite{mitra_entrywise_2009}
	providing $p\geq \log^6(n)/n$. We have managed to weaken the assumptions and
	thereby strengthen the result.
	\begin{proposition}
		Suppose $A$ is an adjacency matrix of a random Erd\H{o}s-R\'enyi graph
$\GG(n,p)$
		with $p=\omega( \log^3(n)/(n\log^2\log n))$. Let $\ket{\lambda_{1}}$ denote
the eigenvector
		corresponding to the largest eigenvalue of $A$ and let $\ket s=\frac{1}{\sqrt
			n}\sum_{v}\ket v$. Then 
		\begin{equation}
		\|\ket{\lambda_{1}}-\ket s \|_\infty =o\left (\frac{1}{\sqrt n} \right )
		\end{equation}
		with probability $1-o(1)$.
	\end{proposition}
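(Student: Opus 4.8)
The plan is to control the leading eigenvector of $A$ by a perturbative argument around its expectation $\E A = p(J - \mathrm{I})$, whose leading eigenvector is exactly $\ket s$ with eigenvalue $(n-1)p$. First I would write $A = \E A + (A - \E A)$ and recall the standard spectral-norm bound $\|A - \E A\|_2 = O(\sqrt{np})$ with probability $1-o(1)$ (e.g.\ by matrix concentration / F\"uredi--Koml\'os-type estimates), which holds comfortably in the regime $p = \omega(\log^3 n/(n\log^2\log n))$; this gives a spectral gap between $\lambda_1 \sim np$ and the bulk of size $O(\sqrt{np})$, so $\ket{\lambda_1}$ is well-defined and $|\braket{s}{\lambda_1}| = 1 - O(1/(np)) = 1-o(1)$ by a Davis--Kahan $\sin\theta$ argument. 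The point, however, is that the $\|\cdot\|_2$ bound only yields $\|\ket{\lambda_1} - \ket s\|_2 = o(1)$, not the much stronger entrywise statement, so the real work is to bootstrap from an $\ell_2$ bound to an $\ell_\infty$ bound.

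The key step is an entrywise self-bounding (``leave-one-out'' or fixed-point) argument. Writing $\lambda_1 \ket{\lambda_1} = A\ket{\lambda_1}$ and subtracting $(n-1)p\,\ket s$-type terms, one gets for each coordinate $w$
\begin{equation}
\lambda_1 \braket{w}{\lambda_1} = \sum_{v} A_{wv}\braket{v}{\lambda_1} = \sum_v p\,\braket{v}{\lambda_1} + \sum_v (A_{wv} - p)\braket{v}{\lambda_1} - p\braket{w}{\lambda_1}.
\end{equation}
The first sum is essentially $\sqrt n\, \overline{\braket{\cdot}{\lambda_1}} \approx \braket{s}{\lambda_1}\sqrt n \cdot$, which is the ``signal'' pushing every coordinate toward $1/\sqrt n$; the second sum is the ``noise.'' To bound the noise I would fix $w$, replace $\ket{\lambda_1}$ by the leading eigenvector $\ket{\lambda_1^{(w)}}$ of the matrix $A$ with the $w$-th row and column zeroed out (which is independent of the entries $\{A_{wv}\}_v$), estimate $\sum_v (A_{wv}-p)\braket{v}{\lambda_1^{(w)}}$ by Bernstein's inequality conditionally (variance $\approx p\|\lambda_1^{(w)}\|_2^2 = p$, so this is $O(\sqrt{p\log n})$ with probability $1 - n^{-\omega(1)}$ after a union bound over $w$), and then control the replacement error $\|\ket{\lambda_1} - \ket{\lambda_1^{(w)}}\|_2$ via the spectral gap. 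Combining, one obtains $|\lambda_1\braket{w}{\lambda_1} - (\text{signal})| = O(\sqrt{p\log n}) + (\text{lower order})$, and dividing by $\lambda_1 \sim np$ gives $|\braket{w}{\lambda_1} - c/\sqrt n| = O(\sqrt{\log n/(n^2 p)}) + \dots$, which is $o(1/\sqrt n)$ precisely when $np/\log n \to \infty$ — this is where the hypothesis $p = \omega(\log^3 n/(n\log^2\log n))$ (in fact already $p = \omega(\log n/n)$ for this crude version) enters, with the stronger power of $\log$ presumably needed to control the accumulated lower-order terms and to push $\sum_v\braket{v}{\lambda_1}$ close enough to $\sqrt n \braket{s}{\lambda_1}$ uniformly.

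Finally I would assemble the pieces: from $|\braket{w}{\lambda_1} - c_w/\sqrt n| = o(1/\sqrt n)$ uniformly in $w$, together with the normalization $\sum_w |\braket{w}{\lambda_1}|^2 = 1$ and the sign choice that makes $\braket{s}{\lambda_1} > 0$, one concludes the constant is $1+o(1)$ and hence $\|\ket{\lambda_1} - \ket s\|_\infty = o(1/\sqrt n)$ with probability $1-o(1)$.

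\medskip

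The main obstacle I anticipate is the leave-one-out decoupling and the propagation of errors: the entry $\braket{w}{\lambda_1}$ depends on the random row $A_{w\cdot}$, so one cannot directly apply a concentration inequality, and the replacement eigenvector $\ket{\lambda_1^{(w)}}$ must be shown to be close to $\ket{\lambda_1}$ in a norm strong enough (ideally $\ell_\infty$ again, creating a mild circularity that is resolved by iterating/bootstrapping) — controlling this simultaneously for all $n$ choices of $w$ while keeping the failure probability $o(1)$, and tracking exactly which power of $\log n$ survives, is the delicate part and is presumably what forces the precise threshold stated in the proposition.
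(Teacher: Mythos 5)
Your strategy is genuinely different from the paper's, so let me first say what the paper actually does. It follows Mitra's power-iteration argument: from the spectral-norm bound $\Vert A-\E A\Vert\leq\sqrt{8np\ln n}$ it first extracts $\alpha=\braket{s}{\lambda_1}\geq 1-16\sqrt{\ln(n)/(np)}$; then degree concentration gives, for \emph{every} $v$ simultaneously, the sandwich $d/\sqrt n\leq\frac{1}{\lambda_1}\bra{v}A\ket{s}\leq u/\sqrt n$ with $d,u=1\mp O(\sqrt{\ln(n)/(np)})$; because $A$ is entrywise nonnegative this sandwich can be iterated, giving $d^l/\sqrt n\leq\bra{v}(A/\lambda_1)^l\ket{s}\leq u^l/\sqrt n$ for $l\approx 2\ln(n)/\ln(np)$; and finally $(A/\lambda_1)^l\ket{s}$ differs from $\alpha\ket{\lambda_1}$ by a vector whose $\ell_2$ (hence $\ell_\infty$) norm is at most $|\beta|\bigl(4\sqrt{\ln(n)/(np)}\bigr)^{l}=|\beta|n^{-c}$. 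The threshold $p=\omega(\log^3(n)/(n\log^2\log n))$ is exactly the condition $(1-d)\,l=o(1)$, i.e.\ it comes from the number of iterations times the per-step degree fluctuation, not from accumulated lower-order terms in an eigenvector equation as you speculate.

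As for your leave-one-out route: it is a known viable strategy (it is essentially the entrywise eigenvector analysis of Abbe--Fan--Wang--Zhong and of the local-law literature), but as written its central step does not close, and this is a genuine gap rather than a detail. After decoupling you must control the replacement term $\sum_v(A_{wv}-p)\bigl(\braket{v}{\lambda_1}-\braket{v}{\lambda_1^{(w)}}\bigr)$, and the only estimate your sketch supplies is Cauchy--Schwarz: $\Vert A_{w\cdot}-p\mathbf 1^{T}\Vert_2\cdot\Vert\ket{\lambda_1}-\ket{\lambda_1^{(w)}}\Vert_2=O(\sqrt{np})\cdot O(1/\sqrt{np})=O(1)$, since Davis--Kahan against a gap of order $np$ and a rank-two perturbation of norm $O(\sqrt{np})$ gives nothing better for the $\ell_2$ distance. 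Divided by $\lambda_1\sim np$ this contributes $O(1/(np))$ to $\braket{w}{\lambda_1}$, which is $o(1/\sqrt n)$ only when $p=\omega(n^{-1/2})$ --- far from the claimed regime. Repairing this requires row concentration applied to the difference vector together with a genuine bootstrap, and the circularity you yourself flag in the Bernstein step (the sub-exponential term needs $\Vert\ket{\lambda_1^{(w)}}\Vert_\infty=O(1/\sqrt n)$, which is the statement being proved) is part of the same unresolved issue. So the proposal correctly identifies an alternative route and correctly locates its hard point, but it leaves that hard point as an acknowledged obstacle rather than an argument; the paper's power-iteration proof avoids it entirely by never decoupling a single row from the eigenvector.
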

	The proof, which follows the concept proposed by Mitra
	\cite{mitra_entrywise_2009}, can be found in
	Section~\ref{app:element-wise-bound} in Supplementary Materials. This implies
	that all vertices can be found optimally in $\Theta(\sqrt{n})$ time for almost
	all graphs.
	
	To show the common time measurement suppose that the largest eigenvalue of
$\frac{1}{np}A$ satisfies
	\begin{equation}
	\left| \lambda_{1}-1\right|\leq \delta,
	\end{equation}
	where $\delta\to0$. Then the probability of measuring the  searched vertex $w$
in time $t$ can be approximated by~\cite{chakraborty2016spatial}
	\begin{equation}
	\begin{split}
	P_\omega(t) &= \left| \bra w \exp(-\ii Ht) \ket s\right|^2  \\
	&\approx\frac{1}{1+n\delta^2/4}\sin^2 \left(\sqrt{\delta^2/4 + 1/n} t\right)
\label{eq:probability-measurement}.
	\end{split}
	\end{equation} 
	Since $\ket{\lambda_1}$ tends to $\ket s$ in the $\|\cdot\|_\infty $ norm, the
	approximation works for all nodes. Hence, when $\delta=O(\frac{1}{\sqrt{n}})$
	(with small constant in the $\Theta(\frac{1}{\sqrt{n}})$ case), then all of the
	vertices can be found in time $O(\sqrt{n})$. Nevertheless, $\delta$ depends on
a
	chosen graph, and thus the measurement time may differ. In order to ensure that
	the time and probability of measurement are the same for all marked nodes and
almost all graph chosen, one
	should provide $\delta  = o(\frac{1}{\sqrt{n}})$ almost surely. 
	
	If $p=\omega(\log^8(n)/n)$, then the largest eigenvalue $\lambda_1$ follows
	$\mathcal N\left( 1,\frac{1}{n}\sqrt{2(1-p)/p} \right)$ distribution
	\cite{erdhos_spectral_2013}, see Section~\ref{app:largest-eigenvalue-adjacency}
	in Supplementary Materials for a step-by-step derivation, where $\mathcal
	N(\mu,\sigma )$ is the normal distribution with mean $\mu$ and standard
	deviation~$\sigma$. Therefore, one can show that  asymptotically almost surely
	\begin{equation}
	\left | \lambda_{1}\left(\frac{1}{np} A\right) -1 \right |\leq \delta,
	\end{equation}
	where $\delta=o(\frac{1}{n\sqrt{p}})$. Note that since $np=\omega(\log ^8(n))$,
	we have actually $\delta=o(\frac{1}{\sqrt n\log^4(n)})$ in the worst case
	scenario. This, in turn, allows us to use the simplified version of
	Eq.~\eqref{eq:probability-measurement}
	\begin{equation}
	P_{\omega}(t) \approx \sin^2\left(\frac{t}{\sqrt n}\right)
	\end{equation}
	for large $n$. Thus we have that in time $t=\frac{\pi}{2}\sqrt n$, the
	probability of measurement is optimal, independently on a  chosen marked node.
	Finally, we can conclude our results concerning adjacency matrix with the
	following theorem.
	\begin{theorem}
		Suppose we chose a graph according to Erd\H{o}s-R\'enyi $\GG(n,p)$ model with
		$p=\omega(\log^8(n)/n)$. Then by choosing  $M_G=\frac{1}{np}A$, where $A$ is
an
		adjacency matrix in Eq.~\eqref{eq:quantum-spatial-search}, almost surely all
		vertices can be found with probability $1-o(1)$  with common measurement time
approximately $t=\pi\sqrt{n}/2$.
	\end{theorem}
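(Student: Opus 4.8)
The plan is to bolt together two facts that are in hand --- the $\ell^\infty$ convergence of the principal eigenvector (the preceding Proposition) and the Gaussian concentration of the top eigenvalue (\cite{erdhos_spectral_2013}, recalled above) --- and then read off the success probability from~\eqref{eq:probability-measurement}. First I would observe that $p=\omega(\log^8(n)/n)$ is more than enough for the Proposition, since it implies $p=\omega(\log^3(n)/(n\log^2\log n))$; hence with probability $1-o(1)$ the event
\[
\mathcal E_1:\qquad \|\ket{\lambda_{1}}-\ket s\|_\infty = o\!\left(1/\sqrt n\right)
\]
holds, and on $\mathcal E_1$ one has $\braket{\snode}{\lambda_{1}}=(1+o(1))/\sqrt n$ \emph{uniformly over every marked node} $\snode\in V$. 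This uniformity is the crux of the improvement over \cite{chakraborty2016spatial}: it is what turns ``almost all vertices'' into ``all vertices''.

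Next I would bring in the spectral input. For $p=\omega(\log^8(n)/n)$ the largest eigenvalue of $\frac{1}{np}A$ is $\mathcal N\!\left(1,\tfrac1n\sqrt{2(1-p)/p}\right)$-distributed, so its fluctuation around $1$ has scale $\sigma=O(1/(n\sqrt p))=o(1/\sqrt n)$ (the last step uses $np=\omega(\log^8 n)\to\infty$), while standard estimates for sparse Erd\H{o}s-R\'enyi graphs put all remaining eigenvalues at modulus $O(1/\sqrt{np})=o(1)$. Because $np\to\infty$ one can fix a deterministic $\delta=\delta(n)\to0$ with $\delta/\sigma\to\infty$ yet $n\delta^2\to0$, so with probability $1-o(1)$ the event
\[
\mathcal E_2:\qquad \left|\lambda_{1}\!\left(\tfrac{1}{np}A\right)-1\right|\le\delta,\qquad \max_{i\ge2}|\lambda_i|\le c=o(1)
\]
holds. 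On $\mathcal E_2$ the matrix $M_G=\frac{1}{np}A$ has a single outlying eigenvalue and the constant $c$ in Lemma~\ref{lem:the-lemma} is $o(1)$; that lemma (after the harmless rescaling that normalises the top eigenvalue to $1$) then already yields $r=o(1)$ and time $\tfrac{\pi}{2}\sqrt n$, but I would instead use~\eqref{eq:probability-measurement} directly, as it keeps the $O(\delta)$ displacement of $\lambda_{1}$ explicit.

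The remaining step is assembly. Restricting to $\mathcal E_1\cap\mathcal E_2$ --- still of probability $1-o(1)$ by a union bound on complements --- and feeding $|\braket{\snode}{\lambda_{1}}|=(1+o(1))/\sqrt n$ together with $|\lambda_1-1|\le\delta$ into~\eqref{eq:probability-measurement}, one gets $n\delta^2/4=o(1)$ and $\sqrt{\delta^2/4+1/n}\,t=(1+o(1))\,t/\sqrt n$, so that
\[
P_{\snode}(t)=\bigl(1+o(1)\bigr)\,\sin^2\!\bigl((1+o(1))\,t/\sqrt n\bigr)
\]
uniformly in $\snode$; evaluating at $t=\tfrac{\pi}{2}\sqrt n$ gives $P_{\snode}(t)=1-o(1)$ simultaneously for all nodes, which is the claim. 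The step I expect to be the real obstacle is the uniformity in that last display: formula~\eqref{eq:probability-measurement} is established in \cite{chakraborty2016spatial} for a single fixed node, so one must check that the hidden error terms are controlled uniformly in $\snode$. This is exactly what $\mathcal E_1$ buys --- it forces $\sum_{i\ge2}|\braket{\snode}{\lambda_i}|^2=1-(1+o(1))/n$ and hence pins every $\snode$-dependent quantity entering that derivation to its ``fully symmetric'' value --- while the vertex-independent spectral gap $c=o(1)$ from $\mathcal E_2$ handles whatever is left. Everything else (intersecting the two high-probability events, the elementary asymptotics of the sine factor) is routine.
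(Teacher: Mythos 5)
Your proposal is correct and follows essentially the same route as the paper: invoke the $\ell^\infty$ Proposition (noting $\omega(\log^8(n)/n)$ implies $\omega(\log^3(n)/(n\log^2\log n))$) to get $\braket{\snode}{\lambda_1}=(1+o(1))/\sqrt n$ uniformly over all vertices, combine it with the $\mathcal N(1,\tfrac1n\sqrt{2(1-p)/p})$ concentration of $\lambda_1(\tfrac{1}{np}A)$ from \cite{erdhos_spectral_2013} to pick $\delta=o(1/\sqrt n)$, and feed both into Eq.~\eqref{eq:probability-measurement} to read off $P_\snode(\pi\sqrt n/2)=1-o(1)$ for every $\snode$. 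Your choice of $\delta$ with $\delta/\sigma\to\infty$ and $n\delta^2\to0$ is in fact stated more carefully than in the paper's appendix, and your identification of $\ell^\infty$ convergence as the source of uniformity in the approximation is exactly the paper's argument.
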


\paragraph{Element-wise optimality for Laplacian matrix.} Similar
property
	holds for a Laplacian matrix $L$. This is a positive semi-definite matrix,
where
	the dimensionality of null-space corresponds to the number of connected
	components.  Based on the results from \cite{chung_spectra_2011}, one can show
	that for $p=\omega(\log (n) /n)$ all of the others eigenvalues of
$\frac{L}{np}$
	converge to 1, see Section \ref{app:Laplacian-spectrum} in Supplementary
	Materials. At the same time, the  eigenvector corresponding to the null-space
is
	\emph{exactly} the equal superposition $\ket s = \ket {\mu_n} = \frac{1}{\sqrt
		n}\sum_{v\in V} \ket v$. Thus, since for $p>(1+\varepsilon)\log(n)/n$ a graph
is
	almost surely connected, the Laplacian matrix takes the form
	\begin{equation}
	\frac{L}{np} = 0 \cdot  \proj{s} + \sum_{i=1}^{n-1}\mu_i \proj{\mu_i},
	\end{equation}
	where $\mu_i\to 1$ almost surely for $1\leq i\leq n-1$. Here
	$\mu_1,\dots,\mu_{n}$ denote eigenvalues of Laplacian matrix with corresponding
	eigenvectors $\ket{\mu_1},\dots,\ket{\mu_n}$. 
	Note that since the identity
	matrix corresponds to global phase change only, which is an unmeasurable
	parameter, we can equivalently choose
	\begin{equation}
	\Id-\frac{L}{np} = \proj s- \sum_{i=1}^{n-1}(\mu_i-1) \proj{\mu_i}.
	\end{equation}
	Note that the matrix above satisfies the requirements of Lemma~1 from
	\cite{chakraborty2016spatial}, and therefore all of the vertices can be found
	optimally with probability $1-o(1)$. Common time measurement is a direct
	application of Lemma~1 from \cite{chakraborty2016spatial}, since more in-depth
	proof analysis shows, that under the theorem assumptions, $t=\pi\sqrt n /2$
	should be chosen for maximizing the success probability.
	
	The situation changes in the case of $p=O(\log(n)/n)$. Note that for both
	adjacency and Laplacian matrices the evolution does not change the probability
	of measuring isolated vertices. If $p<(1-\varepsilon)\log(n)/n$, then graphs
	almost surely contain such vertices and hence you actually \emph{can} hide a
	vertex in such a graph.
	
	The $p\sim p_0\log (n)/n$ for a constant $p_0>1$ is a smooth transition case between
hiding
	and non-hiding cases mentioned before. In this case based on the Exercise III.4
	from \cite{bollobas2001random}, one can show that $\mu_1 \sim
	(1-p_0)(W_{0}(\frac{1-p_0}{\ee p_0}))^{-1}\log(n)$ and  $\mu_{n-1} \sim
	(1-p_0)(W_{-1}(\frac{1-p_0}{\ee p_0}))^{-1}\log(n)$, where $W_{0},W_{-1}$ are
	Lambert W functions, see Section \ref{app:largest-eigenvalue-laplacian} in
	Supplementary Materials. Here we use the notation $f(n)\sim g(n) \iff
	f(n)-g(n)=o(g(n))$ . In this case, the $M_G=\Id-\frac{1}{np}L$ does not imply
	that both $\mu_1$ and $\mu_{n-1}$ converge to 1.
	
	Nevertheless, we can still make simple changes in a matrix in order to obtain
	optimality of the procedure. Let $a=(1-p_0)(W_{0}(\frac{1-p_0}{\ee p_0}))^{-1}
$
	and $b=(1-p_0)(W_{-1}(\frac{1-p_0}{\ee p_0}))^{-1}$ denote constants
	corresponding to $\mu_{1}$ and $\mu_{n-1}$ limit behavior. Then
	\begin{equation}
	\Id-\frac{2}{(a+b)\log n}L \label{eq:rescaled-laplacian-treshold}
	\end{equation}
	again satisfies Lemma~1 from \cite{chakraborty2016spatial} with
	$c=\frac{a-b}{2}$. According to the Lemma~1, the probability of success after
	time $t=\frac{\pi}{2 \sqrt n}$ is bounded from below by
	\begin{equation}\label{eq:p_bound}
	p_{\mathrm{bound}}=\frac{1-c}{1+c} = {W_{0}\left(\frac{1-p_0}{\ee
			p_0}\right)}/{W_{-1}\left(\frac{1-p_0}{\ee p_0}\right)}.
	\end{equation}
	The bound converges to 0 when $p_0\to 1^+$ and to 1 when $p_0\to\infty$, and
	monotonically changes in $(1,\infty)$, see Fig.~\ref{fig:treshold-probability}.
	Note that this corresponds to the other results. For $p_0<1$ the
	probability of measuring all vertices is equal to 0 due to the connectivity
	issues mentioned before. For $p_0\to \infty$, the situation becomes similar to
	$p=\omega(\log(n)/n)$, where non-hiding property was already shown. Note
however, that the actual success probability seems to be much higher than the
bound, see Fig.~\ref{fig:success-probability-true}. Eventually,
	we conclude all of the results by the following theorems. 
	
	\begin{figure}\centering
		\includegraphics[scale=0.95]{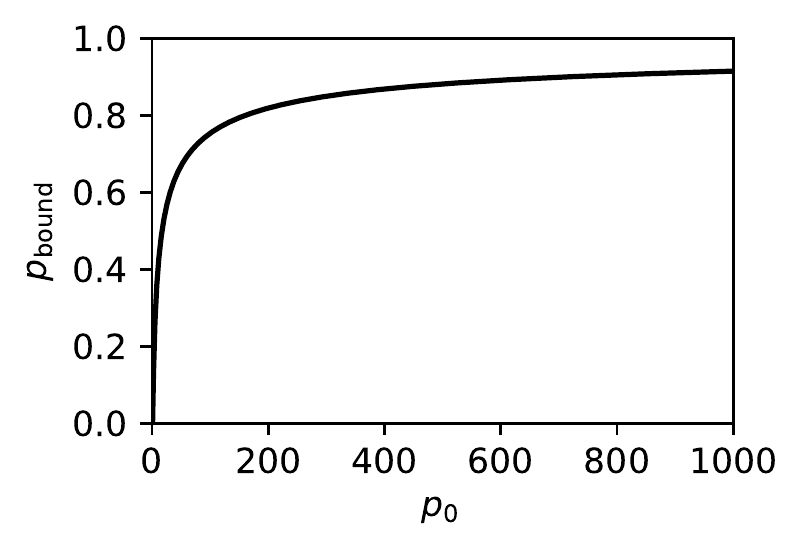} \caption{The lower
			bound of success probability of quantum spatial search for
			$p=p_0\frac{\log(n)}{n}$ for almost all graph from Erd\H{o}s-R\'enyi graph
			model. The exact formula is $p_\mathrm{bound}=W_{0}\left(\frac{1-p_0}{\ee
				p_0}\right)/W_{-1}\left(\frac{1-p_0}{\ee p_0}\right)$. Note that
			$p_{\mathrm{bound}}\to 0$ as $p_0\to 1^+$, where connectivity threshold is
			achieved. Furthermore $p_{\mathrm{bound}}\to 1$ as $p_0\to \infty$}
		\label{fig:treshold-probability}
	\end{figure} 
	
	
	\begin{theorem}
		Suppose we chose a graph according to Erd\H{o}s-R\'enyi $\GG(n,p)$ model. For
		$p=\omega(\log(n)/n)$, by choosing $M_G=\frac{1}{np}L$ in
		Eq.~\eqref{eq:quantum-spatial-search}, almost surely all vertices can be found
		with probability $1-o(1)$ in asymptotic $\pi\sqrt n/2$ time. For $p\sim
		p_0\log(n)/n $, by choosing $M_G = (1+r)\gamma L$ for some proper $r$, where
$\gamma$ is defined as in
		Eq.~\eqref{eq:rescaled-laplacian-treshold}, all vertices can be found in
		$\Theta(\sqrt{n})$ time with probability bounded from below by the
		constant in Eq. \eqref{eq:p_bound}. 
	\end{theorem}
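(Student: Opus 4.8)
The plan is to treat both regimes with the same scheme: reduce the statement to a single application of Lemma~\ref{lem:the-lemma} after pinning down the spectrum of a suitably rescaled Laplacian. Two graph-independent facts make the argument uniform over the marked vertex. First, $\ket s$ is \emph{exactly} the eigenvector of $L$ for the eigenvalue $0$ (the row sums of $L=D-A$ vanish), so it is an exact eigenvector of every affine rescaling $\alpha\Id+\beta L$. Second, for $p\geq(1+\varepsilon)\log n/n$ --- which, for any fixed $\varepsilon>0$, holds eventually whenever $p=\omega(\log n/n)$, and holds for every $p\sim p_0\log n/n$ with $p_0>1$ --- the graph is almost surely connected, so the eigenvalue $0$ of $L$ is simple. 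Finally, I will use that the probability $|\bra w \ee^{-\ii Ht}\ket s|^2$ governing Eq.~\eqref{eq:quantum-spatial-search} is, since $H$ is real symmetric and $\ket w,\ket s$ are real, invariant under $H\mapsto-H$ and under $H\mapsto H+\alpha\Id$; this absorbs every sign and global-phase discrepancy between the convention of Eq.~\eqref{eq:quantum-spatial-search} and that of Lemma~\ref{lem:the-lemma}, so it is enough to exhibit a rescaled Laplacian with $\ket s$ as an extreme (top) eigenvector and all remaining eigenvalues small in modulus.

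For $p=\omega(\log n/n)$ I take $M_G=L/(np)$ and consider $\Id-L/(np)$: its top eigenvector is exactly $\ket s$, with eigenvalue $1$, while every other eigenvalue equals $1-(1+o(1))=o(1)$ in modulus, because by the Laplacian spectral estimate of Section~\ref{app:Laplacian-spectrum} (a Chung--Radcliffe-type concentration bound) every nonzero eigenvalue of $L/(np)$ is almost surely $1+o(1)$. Hence Lemma~\ref{lem:the-lemma} applies with $c=o(1)$; in this constructive case it furnishes $r=o(1)$ and the explicit time $t=\pi\sqrt n/2$, and gives $|\braket w f|^2\geq\frac{1-c}{1+c}+o(1)=1-o(1)$. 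Since $c=o(1)$ almost surely and $t=\pi\sqrt n/2$ depends on neither the graph nor $w$, this establishes the first assertion, common measurement time included.

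For $p\sim p_0\log n/n$ with a constant $p_0>1$ I take $\gamma$ as in Eq.~\eqref{eq:rescaled-laplacian-treshold}, i.e.\ $\gamma=2/((a+b)\log n)$, and consider $\Id-\gamma L$; its top eigenvector is again exactly $\ket s$ (eigenvalue $1$). All nonzero eigenvalues of $L$ satisfy $\mu_{n-1}\leq\mu_i\leq\mu_1$ simply by ordering, so every other eigenvalue of $\Id-\gamma L$ lies in $[1-\gamma\mu_1,\,1-\gamma\mu_{n-1}]$; by the asymptotics $\mu_1\sim a\log n$ and $\mu_{n-1}\sim b\log n$ of Section~\ref{app:largest-eigenvalue-laplacian} (from Exercise~III.4 of \cite{bollobas2001random}) this interval is almost surely contained in $[-\tfrac{a-b}{a+b},\,\tfrac{a-b}{a+b}]+o(1)$, symmetric about $0$. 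One checks $0<b<a$: for $p_0>1$ the argument $\tfrac{1-p_0}{\ee p_0}$ lies in $(-1/\ee,0)$, so $W_0(\tfrac{1-p_0}{\ee p_0})\in(-1,0)$ and $W_{-1}(\tfrac{1-p_0}{\ee p_0})<-1$, both of the sign of $1-p_0$, whence $a=\tfrac{1-p_0}{W_0(\tfrac{1-p_0}{\ee p_0})}$ and $b=\tfrac{1-p_0}{W_{-1}(\tfrac{1-p_0}{\ee p_0})}$ are positive with $b<a$. Thus $c:=\tfrac{a-b}{a+b}\in(0,1)$ and Lemma~\ref{lem:the-lemma} applies: choosing $r$ according to Eq.~\eqref{eq:equality-r} yields $t=\Theta(\sqrt n)$ and success probability asymptotically at least $\frac{1-c}{1+c}=\tfrac{b}{a}=W_0(\tfrac{1-p_0}{\ee p_0})/W_{-1}(\tfrac{1-p_0}{\ee p_0})$, the constant of Eq.~\eqref{eq:p_bound}, again uniformly in $w$.

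The only substantive work --- and the step I expect to be the real obstacle --- lies in the two spectral inputs imported from the Supplementary Materials. For $p=\omega(\log n/n)$ one needs uniform concentration of \emph{all} nonzero eigenvalues of $L$ around $np$, the delicate case being the smallest one, essentially the spectral gap near the connectivity threshold. For $p\sim p_0\log n/n$ one needs the sharp near-threshold location of the two extreme nonzero eigenvalues $\mu_1$ and $\mu_{n-1}$, which tracks the extreme vertex degrees and the Lambert-$W$ solution of the associated transcendental equation. Granting those two facts, the theorem is just the assembly above.
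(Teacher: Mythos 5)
Your argument is essentially the paper's own: $\ket s$ is the exact null vector of $L$, connectivity makes that eigenvalue simple, the spectral inputs from the supplementary sections (all nonzero eigenvalues of $L/(np)$ concentrating at $1$ for $p=\omega(\log n/n)$, respectively $\mu_1\sim a\log n$ and $\mu_{n-1}\sim b\log n$ near the threshold) verify the hypotheses of Lemma~\ref{lem:the-lemma} with $c=o(1)$ and $c=\frac{a-b}{a+b}$ in the two regimes, and the conclusion is read off from the lemma exactly as in the main text. The one substantive difference is your value $c=\frac{a-b}{a+b}$ where the paper writes $c=\frac{a-b}{2}$; yours is the one consistent with the paper's own bound $\frac{1-c}{1+c}=W_{0}\left(\frac{1-p_0}{\ee p_0}\right)/W_{-1}\left(\frac{1-p_0}{\ee p_0}\right)$ of Eq.~\eqref{eq:p_bound}, so you have corrected a typo rather than taken a different route.
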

	We leave determining proper $r$ and $t$ values as open question. 
	
	\begin{figure}\centering
		\begin{minipage}{\textwidth}\centering
			\includegraphics{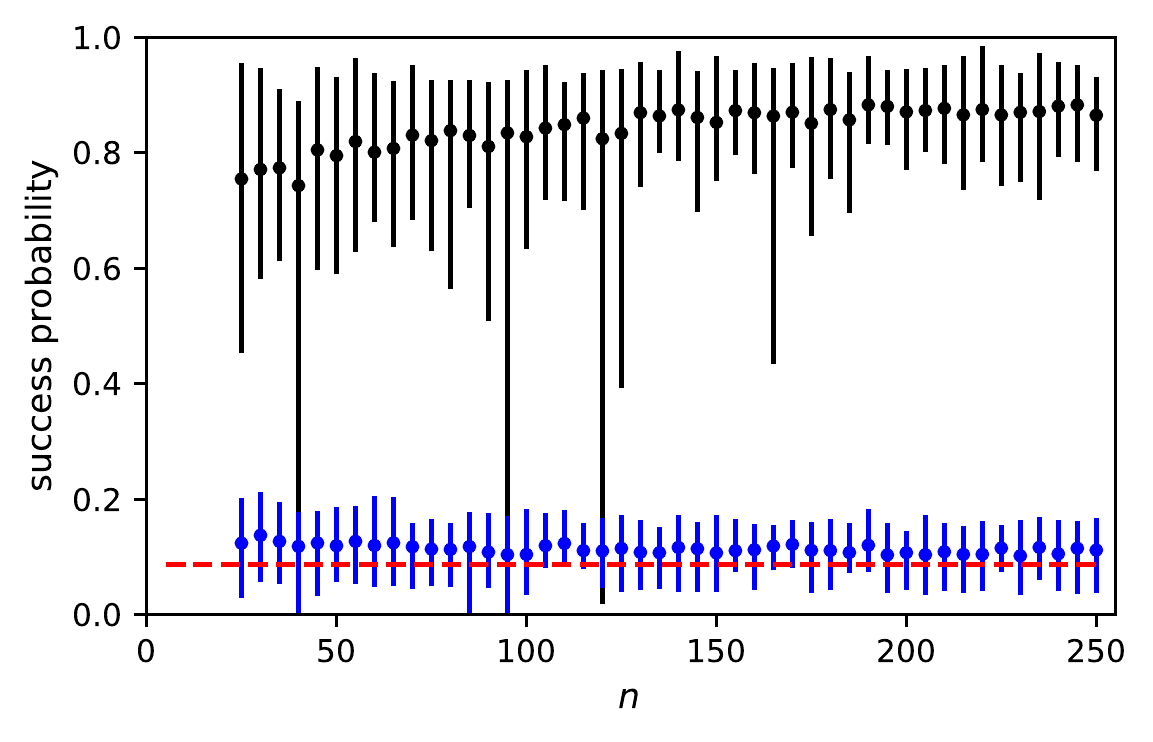}
		\end{minipage}\\
		\begin{minipage}{\textwidth}\centering
		\includegraphics{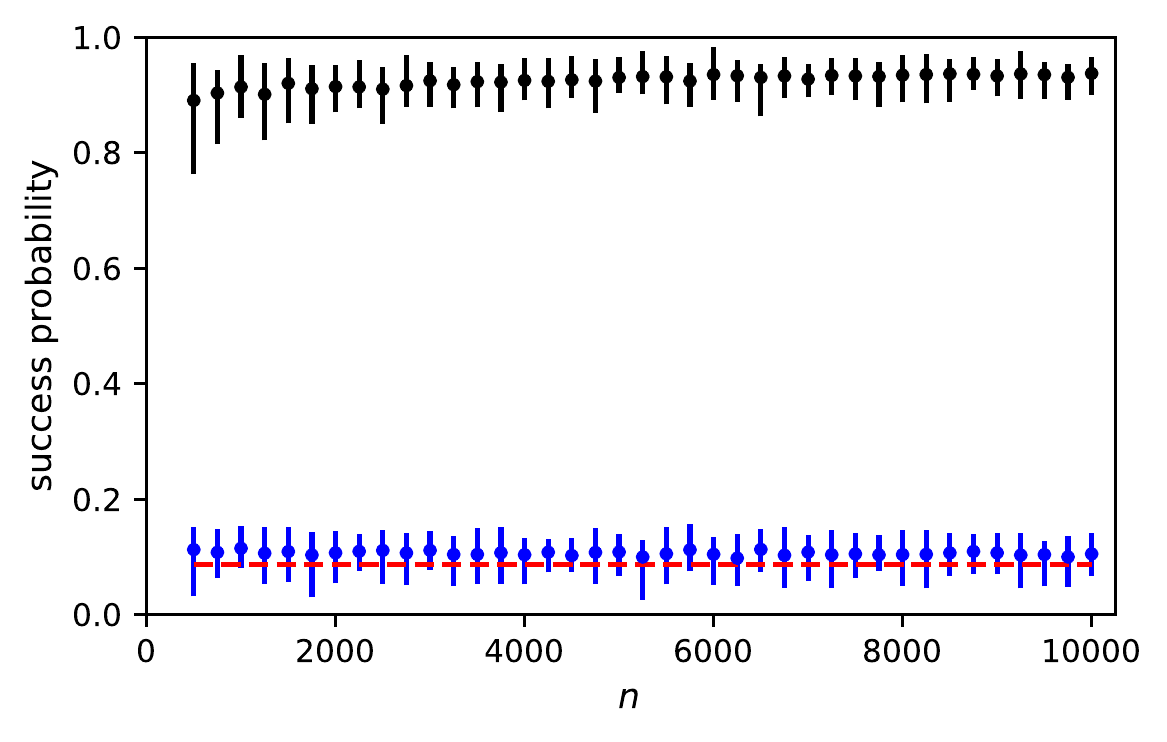}
		\end{minipage}
	\caption{The figure presents probability bounds for quantum spatial search
together with success probability derived from simulation. The red dashed line
denotes the limit bound for succes probability. The blue error bars denote
$\frac{1-c}{1+c}$ for $c=\max\{|\lambda_2|,|\lambda_n|\}$ for matrix from
Eq.~\ref{eq:rescaled-laplacian-treshold} for randomly chosen graph.  Black error
bars denote the actual success probability. Deviations correspond to the maximal
and minimal obtained values. Graphs were chosen according to the
$\GG(n,2\frac{\log (n)}{n})$ model, $r$ were derived according to the
Eq.~\ref{eq:equality-r} and we chose time $t=\frac{\pi \sqrt n}{2}$. 30 graphs
were chosen for each size. One can see that the bound for randomly chosen graph
oscillates around the limit value, nonetheless the true success probability  is
much higher than the bound.} \label{fig:success-probability-true}
	\end{figure}
	
	\begin{theorem}
		Suppose we chose a graph according to Erd\H{o}s-R\'enyi $\GG(n,p)$ model with
		$p\leq (1-\varepsilon)\log(n)/n$, where $\varepsilon >0$. Then for both
		adjacency and Laplacian matrices there exist vertices which cannot be found in
$o(n)$ time.
	\end{theorem}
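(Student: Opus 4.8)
The strategy is to find, below the connectivity threshold, a vertex that is dynamically decoupled from the rest of the graph, so that no admissible choice of $M_{G}$, of the shift/rescaling parameter $r$, or of the measurement time $t$ can transfer any amplitude onto it. For $p\le(1-\varepsilon)\log(n)/n$ the expected number of isolated vertices of $\GG(n,p)$ is $n(1-p)^{n-1}=(1+o(1))\,n\,e^{-np}\ge n^{\varepsilon}(1+o(1))\to\infty$, the estimate using $np^{2}\le(1-\varepsilon)^{2}\log^{2}(n)/n\to0$; a standard second-moment (Chebyshev) bound shows this count is concentrated, so by the classical Erd\H{o}s--R\'enyi result \cite{erdos1960evolution}, with probability $1-o(1)$ the graph contains at least one isolated vertex $w$, which is all we need.

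If $w$ is isolated, the $w$-th row and column of $A$ vanish, and since $\deg(w)=0$ the same holds for $L=D-A$; hence $A\ket w=\bra w A=0$ and $L\ket w=\bra w L=0$. Consequently, for every graph matrix used in this paper --- $\tfrac1{np}A$, $\tfrac1{np}L$, or the shifted/rescaled Laplacians $\Id-\tfrac1{np}L$ and $\Id-\gamma L$ of Eq.~\eqref{eq:rescaled-laplacian-treshold} --- the vector $M_{G}\ket w$ is a scalar multiple of $\ket w$, so $\ket w$ is an eigenvector, with a real eigenvalue, of every admissible search Hamiltonian, be it $H=-M_{G}-\proj w$ of Eq.~\eqref{eq:quantum-spatial-search} or the shifted version $(1+r)M_{G}+\proj w$ of Lemma~\ref{lem:the-lemma}. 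Being Hermitian, such an $H$ also leaves $\ket w^{\perp}$ invariant, so $e^{-\ii Ht}$ preserves the orthogonal splitting $\Hl_{G}=\mathrm{span}\{\ket w\}\oplus\ket w^{\perp}$ and acts on $\ket w$ by a unimodular phase.

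Writing the initial state as $\ket s=\tfrac1{\sqrt n}\ket w+\ket\phi$ with $\ket\phi\perp\ket w$, the invariance gives $\bra w e^{-\ii Ht}\ket s=\tfrac1{\sqrt n}\bra w e^{-\ii Ht}\ket w$, of modulus $\tfrac1{\sqrt n}$; hence $P_{w}(t)=|\bra w e^{-\ii Ht}\ket s|^{2}=\tfrac1n$ identically in $t$ and $r$, and for both $A$ and $L$. This equals the random-guessing value and never improves, so any procedure that outputs $w$ with probability bounded away from $0$ must repeat the single-shot experiment $\Omega(n)$ times and therefore runs for time $\Omega(n)$; equivalently, $w$ cannot be found in $o(n)$ time. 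As the argument is identical for the adjacency and Laplacian matrices, this also certifies that the hypothesis $p\ge(1+\varepsilon)\log(n)/n$ of the preceding theorem is tight.

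I do not expect a genuinely hard analytic step. The only points deserving care are (i) making precise the concentration claim in the first step --- that the second moment of the number of isolated vertices forces $\Pr[\text{no isolated vertex}]=o(1)$, not merely a large expectation, which is exactly what \cite{erdos1960evolution} supplies --- and (ii) fixing the model in which ``found in $o(n)$ time'' is measured so that the amplification argument is airtight, noting in particular that letting $t$ depend adaptively on earlier outcomes gains nothing, since $P_{w}(t)$ does not depend on $t$.
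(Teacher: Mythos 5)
Your proposal is correct and follows essentially the same route as the paper: below the connectivity threshold the graph almost surely contains an isolated vertex $w$, and since the $w$-th row and column of both $A$ and $L$ vanish, $\ket{w}$ is an eigenvector of every admissible search Hamiltonian, so the success probability stays at the random-guess value $1/n$ for all $t$. The paper only states this observation in two sentences; your write-up supplies the concentration argument for the isolated-vertex count and the invariant-subspace computation explicitly, both of which check out.
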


	%
	%
	
	\paragraph*{Conclusion and discussion.}
	
	In this work we prove that all vertices can be found optimally with common
	measurement time $(\pi\sqrt n)/2$  for almost all Erd\H{o}s-R\'enyi graphs
	for both adjacency and Laplacian matrices under conditions
	$p=\omega(\log^8(n)/n)$ and $p\geq(1+\varepsilon)\log(n)/n$ respectively. The
	proof is based on element-wise ergodicity of the eigenvector corresponding to
the 
	outlying eigenvalue of adjacency or Laplacian matrix. While under the mentioned
	constraint adjacency matrix almost surely  achieves success probability
	$1-o(1)$, the same probability for Laplacian matrix in the $p\sim p_0\log
(n)/n$
	case for some $p_0>1$ can only be bounded from below by some positive constant.
	At the same time for $p<(1-\varepsilon)\log (n)/n$, the property does not hold
	anymore, since almost surely there exist isolated vertices which need
	$\Omega(n)$ time to be found.
	
	While our derivation concerning the Laplacian matrix is nearlt complete, since
only upper-bound for success probability is missing in the $p=\Theta(\log(n)/n)$
case, in our opinion
	it is possible to weaken the condition on $p$ for the adjacency matrix. The
	first key step would be showing that the largest eigenvalue
	$\lambda(\frac{1}{np} A)$ follows $\mathcal N(1,\frac{1}{n}\sqrt{2(1-p)/p)}$
	distribution for $p\geq(1+\varepsilon)\log(n)/n$. Then, since element-wise
	convergence of principal vector requires $p=\omega(\log^3(n)/(n\log^2\log n))$,
the result
	would be strengthened to the last mentioned constraint. The second step would
be the
	generalization of the mentioned element-wise convergence theorem.
	
	
	Further interesting generalization of the result would be the analysis of more
	general random graph models as well. While this proposition has already been
	stated \cite{chakraborty2016spatial}, our results show that in order to prove
	security of the quantum spatial search, it would be desirable to analyze the
	limit behavior of the principal vector in the sense of $\|\cdot\|_\infty$ norm.

	{\noindent \bf {Acknowledgements} }
	Aleksandra Krawiec, Ryszard Kukulski and Zbigniew Pucha\l{}a acknowledge the
support from the National Science Centre, Poland 
	under project number 2016/22/E/ST6/00062. Adam Glos
	was supported by the National Science Centre under project number
DEC-2011/03/D/ST6/00413.

\appendix
\newpage

\section{Element-wise bound on principal
	eigenvector}\label{app:element-wise-bound} Let $G_{n,p}$ be a random
Erd\H{o}s-R\'{e}nyi graph, $\deg(v)$ be a degree of the vertex $v \in V$ and $A$
be its adjacency matrix with eigenvalues $\lambda_1 \geq \lambda_2 \geq \ldots
\geq \lambda_n$. Let also $\ket{\lambda_i}$ be an eigenvector corresponding to
the eigenvalue $\lambda_i$ and $\ket{s} = \frac{1}{\sqrt{n}}\ket{\mathbf{1}}
=\frac{1}{\sqrt{n}} \sum_{i=1}^n \ket{v}  $.

\begin{proposition}
For the probability $p = \omega \left( \ln^3(n)/(n\log^2\log n) \right)$and some constant $c>0$ we have
\begin{equation}
\Vert \ket{\lambda_1} -\ket{s} \Vert_{\infty} \leq c \frac{1}{\sqrt{n}}\frac{\ln^{3/2}(n)}{\sqrt{np} \ln(np)}
\end{equation}
with probability $1-o(1)$.
\end{proposition}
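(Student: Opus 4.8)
The plan is to view $\ket{\lambda_1}$ as a perturbation of the flat vector $\ket s$ driven by the centred matrix $E := A-\E A = A-p(J-\Id)$, with $J$ the all-ones matrix. Set $\mu:=\lambda_1+p$, $\alpha:=\braket{s}{\lambda_1}$, $P:=\Id-\proj{s}$. The eigen-equation $A\ket{\lambda_1}=\lambda_1\ket{\lambda_1}$ together with $\E A = np\proj{s}-p\Id$ rearranges to $\mu\ket{\lambda_1} = np\,\alpha\ket s + E\ket{\lambda_1}$; projecting onto $\ket s^\perp$ gives $P\ket{\lambda_1}=\mu^{-1}PE\ket{\lambda_1}$, and resolving this fixed point produces the convergent Neumann series
\[
\ket{\lambda_1}-\alpha\ket s \;=\; \alpha\sum_{k\ge1}\mu^{-k}\,(PE)^{k}\ket s .
\]
First I would record the routine facts, all holding with probability $1-o(1)$ once $np=\omega(\log^3 n/\log^2\log n)\gg\log n$: the spectral bound $\|E\|\le C\sqrt{np}$; the location $\lambda_1=(n-1)p+O(\sqrt{np})$ with $|\lambda_i|=O(\sqrt{np})$ for $i\ge2$, so $\mu=(1+o(1))np$ and $\|\mu^{-1}PE\|\le C/\sqrt{np}=o(1)$, which makes the series converge; Davis--Kahan, $\|\ket{\lambda_1}-\ket s\|_2=O(1/\sqrt{np})$, whence $|\alpha-1|=O(1/(np))$ so the scalar piece $|\alpha-1|\,\|\ket s\|_\infty$ is far below the claimed bound; and degree concentration, $\max_v|\deg(v)-(n-1)p|\le C\sqrt{np\log n}$. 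The last fact disposes of the $k=1$ term: $(E\ket s)_v=n^{-1/2}(\deg(v)-(n-1)p)$ and $PE\ket s=E\ket s-\braket{s}{Es}\ket s$, hence $\mu^{-1}\|PE\ket s\|_\infty\le\mu^{-1}(\|E\ket s\|_\infty+\|E\|/\sqrt n)\le C'n^{-1/2}(np)^{-1/2}\sqrt{\log n}$, which is already within the target (since $\log(np)\le\log n$).

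\medskip
\noindent\textbf{The main estimate.} What remains is $\sum_{k\ge2}\mu^{-k}\|(PE)^{k}\ket s\|_\infty$. I would truncate the series after $K=\Theta(\log n/\log(np))$ terms, so that the tail is deterministically at most $(\|E\|/\mu)^{K}\le(C/\sqrt{np})^{K}$, far below the target. For $2\le k\le K$ the goal is the bound $\|(PE)^{k}\ket s\|_\infty\le(np)^{k-1}C\sqrt{p\log n}$ --- i.e.\ each application of $PE$ costs a factor \emph{at most $(1+o(1))\,np$} in the sup-norm. Granting this, $\sum_{k=1}^{K}\mu^{-k}\|(PE)^{k}\ket s\|_\infty\le K\cdot C\sqrt{p\log n}/(np)$; and since $K\asymp\log n/\log(np)$ and $n\sqrt p=\sqrt n\sqrt{np}$, the right-hand side is exactly $C'\,n^{-1/2}(np)^{-1/2}\log^{3/2}(n)/\log(np)$, the claimed bound. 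The per-step factor $(1+o(1))np$ cannot be obtained from the crude bound $\|PE\,y\|_\infty\le\|PE\|_{\infty\to\infty}\|y\|_\infty\asymp np\,\|y\|_\infty$, since that implicit constant raised to the power $K$ is already $n^{\Theta(1/\log(np))}$ and overwhelms the $\log n/\log(np)$ slack. Instead one expands $(PE)^{k}\ket s$ entrywise (up to lower-order $P$-corrections treated as in the $k=1$ case) as a weighted sum over length-$k$ walks $\sum E_{v\,v_1}E_{v_1v_2}\cdots E_{v_{k-1}v_k}s_{v_k}$ and separates it into the \emph{tree-like} walks, whose contribution along each coordinate is a centred degree-$k$ polynomial in the entries of $E$ that concentrates at scale $\sqrt{p\log n}\,(np)^{k-1}$, and the \emph{self-intersecting} walks, whose combinatorial weight is estimated by hand. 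Keeping the surplus from the second family down to a $1+o(1)$ multiplicative factor \emph{per step}, compounded over all $K\asymp\log n/\log(np)$ steps, is precisely what forces $p=\omega(\log^3 n/(n\log^2\log n))$; this is the point at which Mitra's $p\ge\log^6 n/n$ is improved, via a sharper count of the short cycles created by repeated edges.

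\medskip
\noindent\textbf{Main obstacle.} The hard part is exactly the estimate $\|(PE)^{k}\ket s\|_\infty\le(np)^{k-1}C\sqrt{p\log n}$ for $2\le k\le K$. The vector $(PE)^{k-1}\ket s$ is strongly $E$-dependent, so a fixed-vector Bernstein bound on its coordinates is not legitimate, and there is no deterministic $\ell_2\!\to\!\ell_\infty$ bound for $E$ to fall back on --- for an adversarially chosen unit vector $y$ one can have $\|Ey\|_\infty\asymp np\gg\sqrt{np\log n}$. Making the walk decomposition quantitative enough to produce the $(1+o(1))np$ per-step factor \emph{uniformly in $k\le K$} is the technical core and is what pins down the density threshold. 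A leave-one-out alternative --- inside the $v$-th coordinate of $E\ket{\lambda_1}$, replace $\ket{\lambda_1}$ by the top eigenvector $\ket{\lambda_1^{(v)}}$ of $A$ with row and column $v$ deleted (independent of the $v$-th row of $E$) and then bootstrap the sup-norm --- cleanly decouples the randomness at a single step, but $\|\ket{\lambda_1}-\ket{\lambda_1^{(v)}}\|_2=\Theta(1/\sqrt{np})$ already exceeds the target, so the substitution error has to be re-expanded and one is led back to essentially the same combinatorial bookkeeping.
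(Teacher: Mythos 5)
Your reduction is sound up to the point you yourself flag: everything hinges on the estimate $\|(PE)^{k}\ket s\|_\infty\le C(np)^{k-1}\sqrt{p\log n}$ for $2\le k\le K$ with a per-step factor $(1+o(1))\,np$, and this is asserted rather than proved. As you note, the crude bound $\|PE\|_{\infty\to\infty}=\Theta(np)$ carries a constant (already $\sum_u|E_{vu}|\approx 2np(1-p)$ for a single row) whose $K$-th power is $e^{\Theta(\log n/\log(np))}$ and destroys the bound; the walk decomposition into tree-like and self-intersecting paths is left entirely qualitative; and the leave-one-out route fails because the substitution error $\Theta(1/\sqrt{np})$ in $\ell_2$ already exceeds the target. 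So the proposal correctly identifies the central obstacle but does not overcome it: the argument is incomplete precisely at its core.

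The paper's proof (following Mitra) sidesteps this difficulty entirely by never centering the matrix. It iterates the \emph{nonnegative} matrix $A/\lambda_1$ on the \emph{positive} vector $\ket s$: degree concentration gives $d\le\deg(v)/\lambda_1\le u$ for all $v$ with $d,u=1\mp O(\sqrt{\log(n)/(np)})$, and since $A\ge 0$ entrywise the sandwich $\frac{d^{l}}{\sqrt n}\le\bra v(A/\lambda_1)^{l}\ket s\le\frac{u^{l}}{\sqrt n}$ propagates by induction --- each application of $A/\lambda_1$ multiplies the entrywise upper and lower bounds by exactly $u$ or $d$, which is precisely the $(1+o(1))$ per-step factor you could not extract for $PE$. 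The spectral gap then shows that $(A/\lambda_1)^{l}\ket s$ equals $\alpha\ket{\lambda_1}$ up to an $\ell_2$ (hence $\ell_\infty$) error of order $n^{-1}(np/\log n)^{-1/4}$ for $l=\lceil\log(n)/\log(\sqrt{np/\log n}/4)\rceil$, and $1-d^{l}\approx(1-d)l=O(\log^{3/2}(n)/(\sqrt{np}\log(np)))$ delivers the stated rate. If you want to complete your version, this positivity trick is the missing ingredient: working with powers of the uncentered $A$ rather than of $PE$ makes the compounding-constant problem disappear.
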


\begin{proof}
Using  \cite{chung_spectra_2011}, we have
\begin{gather}
\Vert A-E(A) \Vert \leq \sqrt{8np\ln(n)}, \label{eq:zal1} \\
|\lambda_{1} -np| \leq \sqrt{8np\ln(\sqrt{2}n)}, \label{eq:zal2} \\
\underset{ i\geq 2}{\max} |\lambda_i| \leq \sqrt{8np\ln(\sqrt{2}n)}, \label{eq:zal3}
\end{gather}
with probability $1-o(1)$. The first inequality was shown in the proof of
Theorem 1 while the second and third inequalities come from Theorem 3 in 
\cite{chung_spectra_2011}. Note $\deg(v)$ follows a binomial distribution. Using
Lindenberg's CLT and the fact that the convergence is uniform one can show that
\begin{equation}
\begin{split}
P\left( |\deg(v)-np| \leq 2\sqrt{\ln(n)np(1-p)} \right) &
\approx P\left(|\mathcal X| \leq 2\sqrt{\ln(n)}\right) \\
&\geq 1- \frac{1}{\sqrt{2\pi\ln(n)} n^2},
\end{split}
\end{equation}
where $\mathcal X$ is a random variable with standard normal distribution. Let
$A=\lambda_1\ketbra{\lambda_1}{\lambda_1}+\sum_{i\geq 2}
\lambda_i\ketbra{\lambda_i}{\lambda_i}$ and $\ket{s} = \alpha
\ket{\lambda_1}+\beta \ket{\lambda_1^\perp}$. Assume that
$\ket{\lambda_1},\ket{\lambda_1^\perp},\ket{\lambda_i}$ are normed vectors and
$\ket{\lambda_1^\perp}=\sum_{i \geq 2} \gamma_i \ket{\lambda_i}$. By the
Perron-Frobenius Theorem we can choose a vector $\ket{\lambda_1}$ such that
$\braket{v}{\lambda_1}\geq 0$ and hence obtain $\braket{s}{\lambda_1}=\alpha>0$.
Thus 
\begin{equation}
\begin{split}
&\left(A-E(A)\right)\ket{\lambda_1}\\
&=\left(\lambda_1\ketbra{\lambda_1}{\lambda_1}+\sum_{i\geq
 2} 
\lambda_i\ketbra{\lambda_i}{\lambda_i}-np\ketbra{s}{s}\right)\ket{\lambda_1}\\ 
&= (\lambda_1 -np\alpha^2)\ket{\lambda_1}-np\alpha \beta 
\ket{\lambda_1^{\perp}}.
\end{split}
\end{equation}
With probability $1-o(1)$, using Eq.~(\ref{eq:zal1}) we have
\begin{equation}
\begin{split}
(\lambda_1 -np\alpha ^2)^2+(np)^2\alpha ^2\beta ^2&=\Vert \left(A-E(A)\right)\ket{\lambda_1} \Vert^2 \\&\leq 8np \ln(n)
\end{split}
\end{equation}
and thus since $\beta^2=1-\alpha^2$, then
\begin{equation}
\alpha^2np(np-2 \lambda_1)+\lambda_1^2 \leq 8np \ln(n).
\end{equation}
Eventually, we receive 
\begin{equation}
\begin{split}
1 \geq \alpha \geq \alpha^2 &\geq \frac{\lambda_1^2-8np \ln(n)}{2 \lambda_1 np-(np)^2}  \\  &\geq 1-\frac{4}{2+\sqrt{\frac{np}{8\ln(\sqrt{2}n)}}} \\
&\geq 1-\frac{16}{\sqrt{\frac{np}{\ln(n)}}},
\label{eq:szac.a}
\end{split}
\end{equation}
where the fourth inequality comes from Eq.~(\ref{eq:zal2}). We know that
$|\deg(v)-np| \leq 2 \sqrt{n \ln(n) p(1-p)}$ with probability greater than
$1-\frac{1}{n^2}$. Thus, with probability $1-\frac{1}{n}$ the above is true for
all $v \in V$ simultaneously. Now, since $\deg(v)= \bra{v} A \ket{\mathbf{1}}$, 
we have
\begin{equation}
\begin{split}
\frac{np - 2 \sqrt{n \ln(n) p(1-p)}}{\lambda_1} &
\leq \frac{1}{\lambda_1}\bra{v}\texttt{}A\ket{\mathbf{1}} \\
&\leq  \frac{np + 2 \sqrt{n \ln(n) p(1-p)}}{\lambda_1}
\end{split}
\end{equation}
The lower bound can be estimated as
\begin{equation}
\begin{split}
\frac{np - 2 \sqrt{n \ln(n) p(1-p)}}{\lambda_1} 
&\overset{(\ref{eq:zal2})}{\geq}  \frac{1-2\sqrt{\ln(n)\frac{1-p}{np}}}{1+\sqrt{8\frac{\ln(\sqrt{2}n)}{np}}} \\
&\geq 
\frac{1-2\sqrt{\frac{\ln(n)}{np}}}{1+4\sqrt{\frac{\ln(n)}{np}}} \eqqcolon d
\end{split}
\end{equation}
and similarly the upper bound
\begin{equation}
\frac{np +  2 \sqrt{n \ln(n) p(1-p)}}{\lambda_1} \leq \frac{1+2\sqrt{\frac{\ln(n)}{np}}}{1-4\sqrt{\frac{\ln(n)}{np}}} \eqqcolon u.
\end{equation}
Consequently 
\begin{equation}
\frac{d}{\sqrt n}\leq \frac{1}{\lambda_1}\bra{v} A\ket{s} \leq \frac{u}{\sqrt n} \label{eq:13}
\end{equation}
for all $v \in V$. Let $l=c\frac{\ln(n)}{\ln(\sqrt{\frac{np}{\ln(n)}}/4)}$, where $c=c(n,p) \in [1, 2)$ is chosen to satisfy $l=\left\lceil  \frac{\ln(n)}{\ln(\sqrt{\frac{np}{\ln(n)}}/4)} \right\rceil$. Hence
\begin{equation}
\frac{d^l}{\sqrt n}\leq \bra{v} \left(\frac{A}{\lambda}  \right)^l\ket{s}\leq \frac{u^l}{\sqrt n}
\end{equation}
for all $v \in V$.
On the other hand
\begin{equation}
\begin{split}
\left(\frac{1}{\lambda_1}A \right)^l \left(\alpha \ket{\lambda_1}+\beta \ket{\lambda_1^\perp}\right)&=\left(\ketbra{\lambda_1}{\lambda_1}+\sum_{i\geq 2} \left(\frac{\lambda_i}{\lambda_1}\right)^l \ketbra{\lambda_i}{\lambda_i}\right)\left(\alpha \ket{\lambda_1}+\beta \ket{\lambda_1^\perp}\right)\\ 
&=\alpha\ket{\lambda_1}+\beta \sum_{i\geq 2} \left(\frac{\lambda_i}{\lambda_1}\right)^l \gamma_i \ket{\lambda_i}.
\end{split}
\label{eq:14}
\end{equation}
Using Eq.~(\ref{eq:zal1},\ref{eq:zal2}) we are able to estimate  $\frac{\lambda_i}{\lambda_1}$ by 
\begin{equation}
\begin{split}
\frac{\lambda_i}{\lambda_1} &\leq
\frac{\sqrt{8np\ln(\sqrt{2}n)}}{np-\sqrt{8np\ln(\sqrt{2}n)}}\\
&=\frac{1}{\sqrt{\frac{np}{8\ln(\sqrt{2}n)}}-1} \leq \frac{4}{\sqrt{\frac{np}{\ln(n)}}}.
\end{split}
\end{equation}
Thus
\begin{equation}
\begin{split}
\left\Vert \beta \sum_{i\geq 2} \left(\frac{\lambda_i}{\lambda_1}\right)^l \gamma_i \ket{\lambda_i} \right\Vert_\infty 
&\leq |\beta| \left\Vert  \sum_{i\geq 2} \left(\frac{\lambda_i}{\lambda_1}\right)^l \gamma_i \ket{\lambda_i} \right\Vert_2\\
&\leq |\beta| \sqrt{ \sum_{i\geq 2} \gamma_i^2\left(\frac{4}{\sqrt{\frac{np}{\ln(n)}}}\right)^{2l}} \\&= \frac{|\beta|}{\left(\frac{\sqrt{\frac{np}{\ln(n)}}}{4}\right)^{l}}
= \frac{|\beta |}{n^c}\\
&\leq \frac{4}{\left(\frac{np}{\ln(n)}\right)^{1/4} n},
\end{split}
\label{eq:16}
\end{equation}
where the last inequality comes from Eq.~(\ref{eq:szac.a}) and \mbox{$\|\cdot\|_2$} denotes the Euclidean norm.
By Eq.~(\ref{eq:13},\ref{eq:14}) we get
\begin{equation}
\frac{d^l}{\sqrt n}\leq \alpha\braket{v}{\lambda_1}+ \bra{v} \left( \beta \sum_{i\geq 2} \left(\frac{\lambda_i}{\lambda_1}\right)^l \gamma_i \ket{\lambda_i} \right)  \leq\frac{u^l}{\sqrt n},
\end{equation}
for all $v \in V$ and using Eq.~(\ref{eq:szac.a},\ref{eq:16}) we eventually obtain
\begin{equation}
\frac{\frac{d^l}{\sqrt{n}}-\frac{4}{\left(\frac{np}{\ln(n)}\right)^{1/4} n}}{1} \leq \braket{v}{\lambda_1}  \leq
 \frac{\frac{u^l}{\sqrt{n}}+\frac{4}{\left(\frac{np}{\ln(n)}\right)^{1/4} n}}{1-\frac{16}{\sqrt{\frac{np}{\ln(n)}}}}
\end{equation}
for all $v \in V$. In order to finish the proof it is necessary to show that
\begin{equation}
(1-d^l)+\frac{4}{\left(\frac{np}{\ln(n)}\right)^{1/4} \sqrt{n}} = O\left(\frac{\ln^{3/2}(n)}{\sqrt{np} \ln(np)}\right)
\label{eq:21}
\end{equation}
and
\begin{equation}
(u^l-1)+\frac{4}{\left(\frac{np}{\ln(n)}\right)^{1/4} \sqrt{n}}= O\left(\frac{\ln^{3/2}(n)}{\sqrt{np} \ln(np)}\right).
\end{equation}
We need to estimate how quickly  $d^l$ converges to $1$. Using the fact that $d
\rightarrow 1$, it is enough to observe that
\begin{equation}
(1-d)l = O\left(\frac{\ln^{3/2}(n)}{\sqrt{np} \ln\left(  \sqrt{\frac{np}{\ln(n)}}/4\right)} \right),
\end{equation}
and thus 
\begin{equation}
1-d^l \approx 1 - e^{(d-1)l} = O\left(\frac{\ln^{3/2}(n)}{\sqrt{np} \ln(np)} \right).
\end{equation}
The second term of LHS of Eq.~(\ref{eq:21}) converges to $0$ more rapidly than
the bound, so it completes the proof for the lower bound. The same thing for the
upper bound can be shown analogously.
\end{proof}

\section{Distribution of the largest eigenvalue of adjacency matrix}
\label{app:largest-eigenvalue-adjacency} Theorem~6.2 from
\cite{erdhos_spectral_2013} considers the distribution of the largest eigenvalue
of rescaled adjacency matrix $\tilde A = A/ \sqrt{(1-p)pn}$. They show that as
long as $p>\frac{1}{n}$, then
\begin{equation}
\E \lambda_{1} (\tilde  A) = \sqrt\frac{np}{1-p}+\sqrt\frac{1-p}{np} +o(1).\label{eq:max-eigenvalue-normed-exp}
\end{equation}
Furthermore, under another condition $p=\omega(\log^8(n)/n)$ we have
\begin{equation}
\sqrt{\frac{n}{2}}\left( \lambda_{1}(\tilde A)-\E \lambda_{1}(\tilde A)\right) \to \mathcal N(0,1)\label{eq:max-eigenvalue-normed-dist}
\end{equation}
in a distribution. This allows us to derive the distribution of the largest
eigenvalue of the  $\frac{1}{np} A$ matrix
\begin{equation}
\begin{split}
\lambda_{1}\left (\frac{1}{np} A \right ) &= \sqrt\frac{1-p}{np} \lambda_{1}(\tilde A) \\\
&= \frac{\sqrt{2(1-p)}}{n\sqrt p} \left(\sqrt\frac{n}{2}\left( \lambda_{1}(\tilde A)-\E \lambda_{1}(\tilde A)\right)   + \sqrt\frac{n}{2}\E \lambda_{1}(\tilde A) \right) \\
&= \frac{\sqrt{2(1-p)}}{n\sqrt p}  \mathcal X  + 1 + \frac{1-p}{np} +\sqrt\frac{1-p}{np}o(1)
\end{split}
\end{equation}
where $\mathcal X \to \mathcal{N}(0,1)$. Hence we have that
$\lambda_{1}(\frac{1}{np} A) \sim \mathcal
N(1,\frac{1}{n}\sqrt\frac{2(1-p)}{p})$. Note, that under the condition
$p=\omega(\log^8(n)/n)$, the standard deviation tends to~0. This means that the
largest eigenvalue actually tends to the Dirac distribution $\delta_{x=1}$.

This gives as a bound for $\lambda_{1}(\frac{1}{np} A)$. Note that 
\begin{equation}
\mathbb P(|\lambda_{1}( A/(np))-1 |\leq \delta) = 1 -  \erfc \left(\frac{n \sqrt p\delta}{2\sqrt{1-p}}\right).
\end{equation}
The probability tends to 1 as long as the argument tends to $\infty$. In order
to achieve this, we need to assume $n\sqrt{p}\delta \to \infty$ as $n\to\infty$.
This can be done by choosing $\delta = o( \frac{1}{n\sqrt{p}})$. Eventually, we
have asymptotically almost surely
\begin{equation}
\left | \lambda_{1}\left(\frac{1}{np}A\right) -1 \right |= o\left( \frac{1}{n\sqrt{p}} \right).
\end{equation}
 Note that for $p=o(1)$ the bound is better than the one used in \cite{chakraborty2016spatial}.

\section{Laplacian matrix spectrum} \label{app:Laplacian-spectrum} Algebraic
connectivity satisfies $\mu_{n-1}=np+O(\sqrt{np\log n})$ for
$p=\omega(\log(n)/n)$. Similarly we conclude from results of Bryc et al.
\cite{bryc_spectral_2006}, that $\mu_1 \sim np$.

\begin{theorem}
	Let $L_n$ be a Laplacian matrix of random Erd\H{o}s-R\'enyi graph $\GG(n,p)$,
where $p=\omega(\frac{\log n}{n})$. Then $\mu_1=\mu(L) \sim np$.
\end{theorem}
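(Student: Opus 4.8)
The plan is to control $\mu_1 = \mu_1(L_n)$, the largest eigenvalue of the Laplacian $L_n = D_n - A_n$, by a two-sided estimate, showing both $\mu_1 \leq np(1+o(1))$ and $\mu_1 \geq np(1-o(1))$. For the lower bound, I would simply exhibit a test vector: take the vertex $v$ of maximum degree $\deg(v) = \Delta$, and use $\ket{e_v}$ together with the Rayleigh quotient $\mu_1 \geq \bra{e_v} L_n \ket{e_v} = \deg(v)$; since for $p = \omega(\log n/n)$ the maximum degree is $\Delta = np + O(\sqrt{np\log n}) = np(1+o(1))$ by standard Chernoff bounds over the $n$ vertices (each $\deg(v)$ is $\mathrm{Binomial}(n-1,p)$), this gives the matching lower bound almost surely. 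Alternatively one can invoke $\mathrm{tr}(L_n) = \sum_v \deg(v) \sim n^2 p$ together with the semicircle-type control on the bulk, but the max-degree argument is cleaner.

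For the upper bound I would use the triangle inequality in operator norm, writing $L_n = D_n - A_n$, hence
\begin{equation}
\mu_1 = \|L_n\| \leq \|D_n - npI\| + \|A_n - \E A_n\| + \|npI - \E A_n + npI\|,
\end{equation}
more carefully: decompose $L_n = (D_n - \E D_n) + (\E A_n - A_n) + (\E D_n - \E A_n)$. The first term is diagonal with entries $\deg(v) - (n-1)p$, so its norm is $\max_v |\deg(v)-(n-1)p| = O(\sqrt{np\log n})$ almost surely. The second term has norm $O(\sqrt{np\log n})$ by the concentration result quoted in the excerpt from Chung--Radcliffe \cite{chung_spectra_2011} (Eq.~\eqref{eq:zal1}). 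The third term $\E D_n - \E A_n = (n-1)p \cdot I - p(J - I) = npI - pJ$ has eigenvalues $np$ (with multiplicity $n-1$, on $\ket{s}^\perp$) and $np - np = 0$ (on $\ket{s}$), so its norm is exactly $np$. Combining, $\mu_1 \leq np + O(\sqrt{np\log n}) = np(1+o(1))$ almost surely, since $p = \omega(\log n/n)$ forces $\sqrt{np\log n} = o(np)$.

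Putting the two bounds together yields $\mu_1 = np(1+o(1))$, i.e. $\mu_1 \sim np$, with probability $1-o(1)$. The main obstacle — really the only non-elementary ingredient — is the operator-norm concentration $\|A_n - \E A_n\| = O(\sqrt{np\log n})$, which is exactly where the hypothesis $p = \omega(\log n/n)$ is needed and which I would import wholesale from \cite{chung_spectra_2011} rather than reprove; everything else reduces to Chernoff bounds on binomial degrees and the explicit diagonalization of the rank-one perturbation $npI - pJ$. I should also double-check that the relevant Chung--Radcliffe statement applies with the edge probability scaling used here and that the failure probabilities, summed over the $n$ vertices in the degree estimates, still give $1-o(1)$.
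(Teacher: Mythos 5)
Your proof is correct, but it routes around the key external ingredient the paper relies on. The paper also centers the Laplacian at its expectation $Y_n=\E L_n=npI-pJ$ (whose norm is exactly $np$) and needs $\|L_n-\E L_n\|=o(np)$; but it obtains that bound in one shot from Theorem~1.5 of Bryc, Dembo and Jiang \cite{bryc_spectral_2006}, which gives $\|L_n-\E L_n\|\sim\sqrt{2np(1-p)\log n}$ for such ``Markov-type'' matrices, and then concludes both directions simultaneously from the reverse triangle inequality $\bigl|\|L_n\|-\|Y_n\|\bigr|\le\|L_n-Y_n\|$. You instead split the centered Laplacian into its diagonal part $D_n-\E D_n$ (controlled by Chernoff bounds on the binomial degrees plus a union bound) and its off-diagonal part $\E A_n-A_n$ (controlled by the Chung--Radcliffe concentration already quoted in Eq.~\eqref{eq:zal1}), each of norm $O(\sqrt{np\log n})$, and you get the lower bound separately and more elementarily via the Rayleigh quotient $\mu_1\ge\bra{e_v}L_n\ket{e_v}=\Delta$. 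Your version is arguably more self-contained for this paper, since it reuses machinery the paper already imports and avoids citing \cite{bryc_spectral_2006} altogether; the paper's version is shorter and gives the sharper asymptotic constant $\sqrt{2np(1-p)\log n}$ for the fluctuation, though that precision is not needed for the statement. Your closing caveats (checking the Chung--Radcliffe hypotheses at this edge-probability scaling and summing the degree failure probabilities over $n$ vertices) are exactly the right things to verify, and both go through for $p=\omega(\log n/n)$.
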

\begin{proof}
By Theorem~1.5 from \cite{bryc_spectral_2006}, if $\tilde L$ is a symmetric
matrix whose off-diagonal elements have two-points distribution with mean 0 and
variance $p(1-p)$ and $\tilde{L}_{ii} =  \sum_{j \neq i} \tilde{L}_{ij}$. Then
\begin{equation}
\lim_{n\to\infty} \frac{\mu(\tilde L)}{\sqrt{2np(1-p)\log n}}=1. \label{eq:max-eigenvalue-laplacian}
\end{equation}
Note that in the following version $p$ may depend on $n$. Hence, we can extend
the Corollary 1.6 from the same paper.

Let $L_n= \tilde L_n + Y_n$, where $Y_n$ is a deterministic matrix with $-p$ on
off-diagonal and $(n-1)p$ on diagonal. Note that $Y_n$ is an expectation of a
random Erd\H{o}s-R\'enyi Laplacian matrix. $Y_n$ has a single 0 eigenvalue and
all of the others take the form $np$. By this we have $\mu(Y_n)= \|Y_n\| = np$.
Then we have
\begin{equation}
\left|\frac{\|L_n\|}{np}-\frac{\|Y_n\|}{np}\right| \leq \frac{\|L_n-Y_n\|}{np} = \frac{\|\tilde L_n\|}{np}\to 0
\end{equation}
where the limit comes from the Eq.~\eqref{eq:max-eigenvalue-laplacian}, assuming $p=\omega(\log n/n)$. Finally $\frac{\mu(L_n)}{np} \to 1$.
\end{proof}

\section{The largest eigenvalue of Laplacian matrix near the connectivity treshold}\label{app:largest-eigenvalue-laplacian}

Suppose $G$ is a random graph chosen according to $\GG (n,p_0\frac{\log(n)}{n})$  distribution, for $p_0>1$ being a constant. It can be shown, that 
\begin{equation}
\delta \sim (1-p_0) \left(W_{-1}\left  (\frac{1-p_0}{\ee p_0}\right)\right)^{-1}\log(n)
\end{equation}
and
\begin{equation}
\Delta \sim (1-p_0) \left(W_{0}\left  (\frac{1-p_0}{\ee p_0}\right)\right)^{-1}\log(n),
\end{equation}
see \cite{bollobas2001random}, Exercise III.4. Here $\delta$ and $\Delta$ denote
respectively minimal and maximal degree of the graph. In
\cite{kolokolnikov_algebraic_2014} authors have shown that providing
\begin{equation}
|\delta - cnp| =O(\sqrt{np})
\end{equation}
we have
\begin{equation}
|\mu_{n-1} - cnp| =O (\sqrt{np}),
\end{equation}
where $\mu_{n-1}$ is the second smallest eigenvalue of the Laplacian matrix. In
fact, similar behavior can be stated for the largest eigenvalue, \ie{} if
\begin{equation}
|\Delta- cnp| = o(np)
\end{equation}
we have
\begin{equation}
|\mu_{1} - cnp| = o(np).
\end{equation}
While we plan to prove the statement above, it is possible that the RHS can be
reduced to $O(\sqrt{np})$ by following the proof in
\cite{kolokolnikov_algebraic_2014}. Nonetheless, we are satisfied with the
mentioned result. The proof is very similar to the proof of Lemma 3.4 in
\cite{kolokolnikov_algebraic_2014}. Furthermore, note, that the theorem holds
for $p_0>0$.
\begin{theorem}
	Suppose there exists a $p_0>0$ so that $np\geq p_0\log(n)$ and $\Delta\sim cnp$	almost surely. Then almost surely $\mu_{1} \sim cnp$.
\end{theorem}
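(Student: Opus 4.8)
The plan is to pin $\mu_1$ between the maximum degree $\Delta$ and $(1+o(1))\Delta$, which by hypothesis equals $(1+o(1))cnp$. This is exactly the strategy of Lemma~3.4 in \cite{kolokolnikov_algebraic_2014}, where the \emph{smallest} nonzero Laplacian eigenvalue $\mu_{n-1}$ is shown to be asymptotic to the \emph{minimum} degree; here one simply interchanges the roles of ``min'' and ``max'' throughout. Recall $L=D-A$ is positive semidefinite, so $\mu_1=\max_{\|x\|=1}\bra x L\ket x$, and $\mathrm{diag}(L)=(\deg(1),\dots,\deg(n))$.

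For the lower bound I would take a vertex $w$ with $\deg(w)=\Delta$ and use the standard basis vector $\ket w$ as a trial vector in the Rayleigh quotient: $\mu_1\ge\bra w L\ket w=\deg(w)=\Delta\sim cnp$ almost surely by assumption. For the upper bound I would write $L=\mathrm{diag}(\deg)-A$ and split $A=\E A+(A-\E A)$, where $\E A=p(J-I)$ with $J$ the all-ones matrix. For every unit vector $\ket x$ one has $\sum_v\deg(v)|\braket vx|^2\le\Delta$, the identity $\bra x\E A\ket x=p|\braket{\mathbf 1}{x}|^2-p\ge-p$, and $-\bra x(A-\E A)\ket x\le\|A-\E A\|$; combining these,
\begin{equation}
\bra x L\ket x=\sum_v\deg(v)|\braket vx|^2-\bra x\E A\ket x-\bra x(A-\E A)\ket x\le\Delta+p+\|A-\E A\|,
\end{equation}
and maximizing over $\ket x$ gives $\mu_1\le\Delta+p+\|A-\E A\|$.

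Everything then reduces to showing $\|A-\E A\|=o(np)$ with probability $1-o(1)$, and this is the step I expect to be the real obstacle. For moderately large $np$ this is routine, but in the sparse range $np=\Theta(\log n)$ the bound $\|A-\E A\|\le\sqrt{8np\ln n}$ from \cite{chung_spectra_2011} is only $O(\sqrt{np\log n})=O(np)$, hence too weak; one must instead invoke a sharper spectral-norm estimate for sparse random matrices, e.g. that of Feige--Ofek \cite{feige2005spectral}, giving $\|A-\E A\|=O(\sqrt{np})$ once vertices of abnormally high degree are deleted. The hypothesis $\Delta\sim cnp$ guarantees $\Delta=O(np)$, so the heavy-vertex truncation is vacuous and the bound applies to $A$ itself. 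Then $p+\|A-\E A\|=O(\sqrt{np})=o(np)$, and on the intersection of this event with $\{\Delta\sim cnp\}$ (still of probability $1-o(1)$) the sandwich $cnp(1+o(1))=\Delta\le\mu_1\le\Delta+o(np)=(1+o(1))cnp$ gives $\mu_1\sim cnp$.
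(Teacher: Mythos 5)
Your proof is correct and takes essentially the same route as the paper: both arguments pin $\mu_1$ in the sandwich $\Delta \le \mu_1 \le \Delta + O(\sqrt{np})$, getting the lower bound from the Rayleigh quotient at the maximum-degree canonical vector and the upper bound from $\bra{x}D\ket{x}\le\Delta$ plus a Feige--Ofek $O(\sqrt{np})$ estimate on the fluctuation of $A$ (the paper invokes Theorem~2.5 of \cite{feige2005spectral} on vectors orthogonal to $\ket{s}$, while you equivalently center $A$ at $\E A$ and bound $\|A-\E A\|$). Your observation that the Chung--Radcliffe bound $\sqrt{8np\ln n}$ is too weak near the connectivity threshold is exactly the reason the paper also resorts to Feige--Ofek here.
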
 
\begin{proof}
	Note, that since the eigenvector corresponding to 0 eigenvalue is the equal superposition, we have
	\begin{equation}
	\begin{split}
	\mu_{1} &= \max_{\{\ket{\phi}\bot\ket s: \braket{\phi}{\phi}=1\}} \bra \phi L \ket \phi \\
	&= \max_{\{\ket{\phi}\bot\ket s: \braket{\phi}{\phi}=1\}} (\bra \phi D \ket \phi-\bra \phi A \ket \phi).
	\end{split}
	\end{equation}
	Note that 
	\begin{equation}
	\begin{split}
	\mu_{1 }&\leq \max_{\{\ket{\phi}\bot\ket s: \braket{\phi}{\phi}=1\}} \bra \phi D \ket \phi +\max_{\{\ket{\phi}\bot\ket s: \braket{\phi}{\phi}=1\}} |\bra \phi A \ket \phi)|\\
	&\leq \Delta + C\sqrt{np}
	\end{split}
	\end{equation}
	by Theorem 2.5 from \cite{feige2005spectral}. Similarly one can show $\mu_{1
}\geq \Delta$, which can be done by taking maximum over canonical vectors.
After combining those bounds and $\Delta \sim cnp$, we obtain the result.
\end{proof}

\end{document}